\documentclass[a4paper,amsmath,amssymb,aps,prl,twocolumn,preprintnumbers,nofootinbib]{revtex4-2}
\pdfoutput=1 

\usepackage{physics}
\usepackage{braket}
\usepackage{bm}
\usepackage{amsmath,amssymb,amsfonts}
\usepackage{amsthm}
\usepackage{mathtools}
\usepackage{graphicx}
\usepackage{xcolor}
\usepackage[colorlinks=true,
            allcolors=blue]{hyperref}
\usepackage[capitalize,nameinlink]{cleveref}
\usepackage{comment}
\usepackage{dsfont}
\usepackage{enumitem}
\usepackage{bbold}

\newtheorem{theorem}{Theorem}

\theoremstyle{definition}

\newcommand{\cO}{\mathcal O}
\newcommand{\cH}{\mathcal H}
\newcommand{\cA}{\mathcal A}
\newcommand{\cB}{\mathcal B}
\newcommand{\cC}{\mathcal C}
\newcommand{\cE}{\mathcal E}
\newcommand{\cF}{\mathcal F}
\newcommand{\cG}{\mathcal G}

\newcommand{\cR}{\mathcal R}
\newcommand{\cS}{\mathcal S}
\newcommand{\cV}{\mathcal V}
\newcommand{\cI}{\mathcal I}
\newcommand{\IR}{\mathrm{IR}}
\newcommand{\UV}{\mathrm{UV}}

\newcommand{\id}{\mathbb{1}}

\newif\ifclean
\cleanfalse

\newcommand{\tra}[1]{\operatorname{Tr}\!\left[#1\right]}

\begin{document}

\preprint{RUP-26-23}

\title{Informational and algebraic renormalization group}

\author{Takato Mori}\email{takato.mori@yukawa.kyoto-u.ac.jp}
\affiliation{Department of Physics, Rikkyo University, 3-34-1 Nishi-Ikebukuro, Toshima-ku, Tokyo 171-8501, Japan}
\affiliation{RIKEN Quantum, Transformative Research Innovation Platform of RIKEN platforms (TRIP) Headquarters, RIKEN, Wako 351-0198, Japan}

\author{Teruaki Nagasawa}\email{teruaki-nagasawa@se.kanazawa-u.ac.jp}
\affiliation{Institute of Science and Engineering, Kanazawa University, Kanazawa, Ishikawa, 920-1192, Japan}

\date{\today}

\begin{abstract}
    Renormalization group (RG) is a core concept in physics from statistical mechanics to quantum field theory, yet its schemes differ widely between field theory and many-body physics. We formulate the Wilsonian RG as a quantum channel, whose Kraus representation yields pure conditional trajectories for pure inputs and recovers mixed-state flows upon averaging. We then develop an algebraic extension applicable beyond the usual momentum-space, factorized setting: the low-energy algebra is constructed from the one-particle spectrum, and coarse graining is a conditional expectation onto it. We further construct a resource theory of RG whose free states are fixed points. For thermal states, the resulting relative-entropy monotone is proportional to $c-c_{\rm IR}$ to quadratic order along a single stable RG direction near a two-dimensional IR fixed point. We also introduce a complementary measure of the UV information discarded by coarse graining, establish its monotonicity for successive coarse-graining maps, and discuss its holographic realization.
\end{abstract}

\maketitle

\noindent{\em Introduction.}---
The renormalization group (RG) explains how effective descriptions and universal behavior emerge from microscopic physics~\cite{Wilson:1973jj,Polchinski:1983gv}. Real-space schemes act on lattice states or partition functions~\cite{Kadanoff:1966wm,White:1992zz,Vidal:2006sxo}, but their elementary blocking steps are generally discrete; continuous versions~\cite{Haegeman:2011uy,Nozaki:2012zj} are formulated particularly for field theories in the thermodynamic limit. Wilsonian RG eliminates momentum modes and rescales, generating a continuous flow of effective actions or couplings~\cite{Polchinski:1983gv,Wetterich:1992yh} rather than of states. However, it presupposes a continuum description for the rescaling and an identification of the physical scale with the mode momentum. These assumptions become less natural in finite systems, on lattices with nonmonotonic dispersion~\cite{Nielsen:1980rz,Nielsen:1981hk}, or in regimes without quasiparticles. This raises a basic question: can Wilsonian RG and its continuous flow be formulated on the states and observables of generic quantum systems, including finite-dimensional ones?

Answering this question requires specifying what information RG discards. This loss makes RG irreversible: states distinguishable in the ultraviolet (UV) can become indistinguishable in the infrared (IR). The $c$-, $F$-, and $a$-theorems quantify irreversibility under field-theoretic assumptions~\cite{Zamolodchikov:1986gt,Casini:2004bw,Jafferis:2011zi,Casini:2012ei,Komargodski:2011vj,Giombi:2014xxa}. A common framework should make this information loss quantifiable beyond that setting.

Quantum-information approaches have connected RG to distinguishability~\cite{Beny:2012qh,Beny:2014sna} and error correction~\cite{Furuya:2020tzv,Furuya:2021lgx}. Martins Costa \emph{et al.}\ formulated Wilsonian RG as a quantum channel with a momentum-space partial trace and a unitary rescaling~\cite{MartinsCosta:2022bjr}. Goldman, Lashkari, and Leigh derived a Lindbladian for the exact RG of density operators, making any distinguishability measure an RG monotone~\cite{Goldman:2024cvx}. Lashkari built RG monotones from scaling and recovery maps~\cite{Lashkari:2017rcl}. These results motivate a systematic principle for selecting a measure of information loss from RG itself. Resource theory offers a natural candidate: it derives monotones from specified free states and operations without relying on the details of the system~\cite{Chitambar-Gour2019resource-theories}. A resource theory of RG, recently posed as an open problem~\cite{Goto:2026mbj}, would provide this principle by identifying the free states and operations within RG.

In this Letter, we give a unified formulation of Wilsonian RG on the states and observables of generic quantum systems, including finite-dimensional ones. Our informational RG (IRG) combines coarse graining with isometric rescaling into a quantum channel (Fig.~\ref{fig:rg-overview}(a)), accommodating cutoff-regulated Hilbert spaces of unequal dimensions. Its Kraus representation provides a trajectory description of mixed-state RG~\cite{Sang:2023rsp}, allowing comparisons with pure-state approaches~\cite{White:1992zz,Vidal:2006sxo}. Algebraic RG (ARG) replaces an assumed tensor factorization by a retained low-energy algebra constructed from the one-particle spectrum~\cite{Mori:wip}. A conditional expectation onto this algebra, when it exists, coarse grains observables, and its dual channel coarse grains states. Neither construction requires the thermodynamic limit, so both apply to finite-dimensional discrete systems.

\begin{figure}[t] 
    \centering
    \includegraphics[width=0.8\columnwidth]{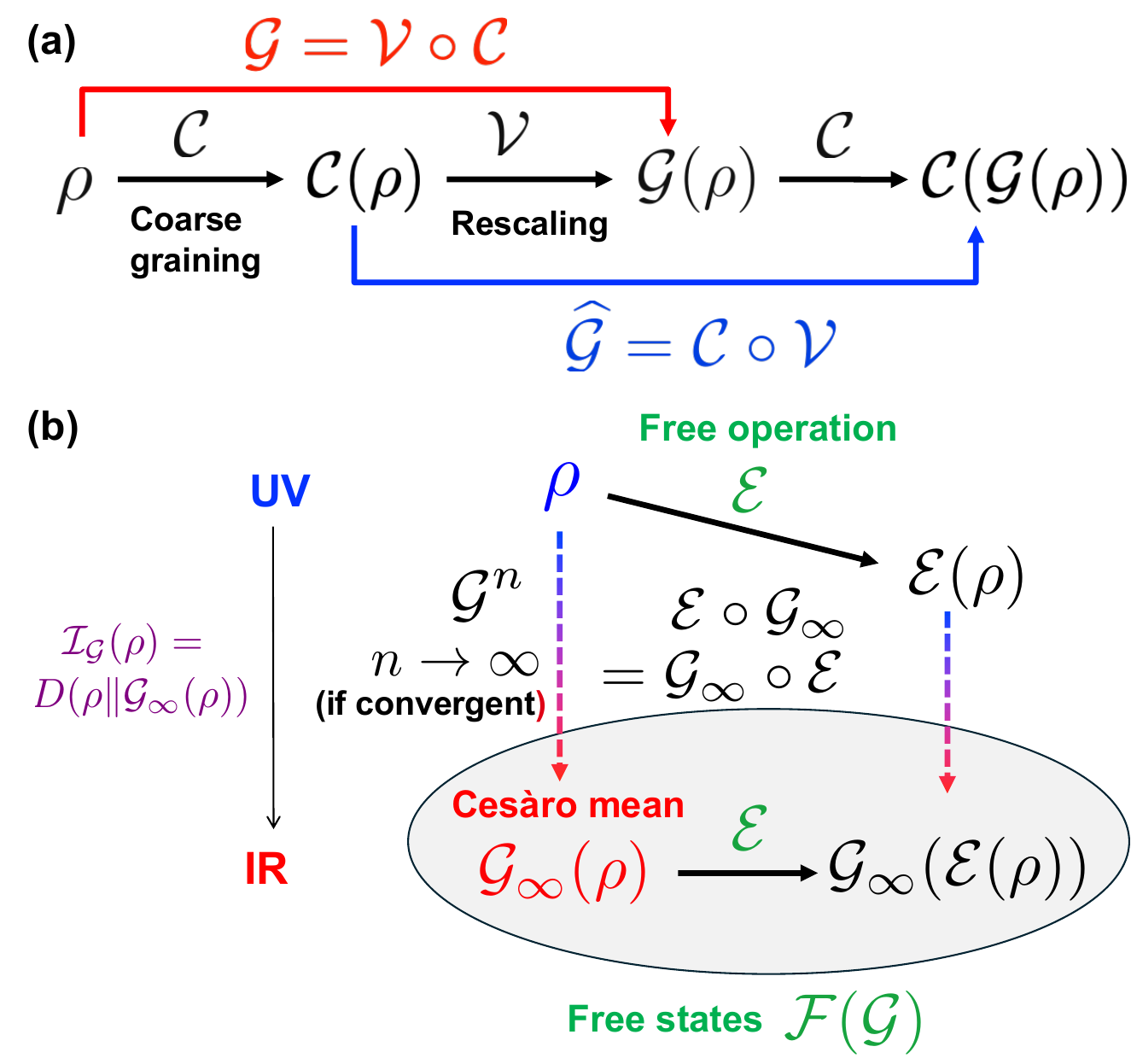}
    \caption{RG as a quantum channel and its resource theory.
    (a) One RG step $\cG=\cV\circ\cC$ coarse grains a state ($\cC$) and rescales it back to the original Hilbert space by an isometry ($\cV$). Iteration proceeds through the reduced map $\widehat\cG=\cC\circ\cV$ on the coarse-grained space.
    (b) Resource theory of RG. The Ces\`aro mean $\cG_\infty$ maps a UV state $\rho$ to a fixed point in the set $\cF(\cG)$ of free states (IR), even when the iterates $\cG^n(\rho)$ do not converge. Free operations $\cE$ commute with $\cG_\infty$ and cannot increase the resource $\cI_\cG(\rho)=D(\rho\|\cG_\infty(\rho))$.}
    \label{fig:rg-overview}
\end{figure}

The RG channel alone fixes a resource theory (Fig.~\ref{fig:rg-overview}(b)), whose free states are RG fixed points. The standard relative-entropy resource measure minimizes the relative entropy of a state with respect to all free states. In finite dimensions, the channel's long-time (Ces\`aro) average projects every state onto a fixed point, even for limit cycles~\cite{Wilson:1970ag,Glazek:2002hq,Bulycheva:2014twa}. Such projections appeared in real-space RG~\cite{Furuya:2020tzv}; here we identify their role as resource-destroying maps~\cite{liu2017resource}. Under the required faithfulness and support conditions, we show that the minimizing free state is uniquely given by the state's own Ces\`aro mean. RG thus selects the optimal reference without variational optimization, and the relative entropy to this reference is the resource measure that never increases along the flow.

The resource-theoretic monotone can be further related to physical RG monotones.
For thermal states under the conditions below, the monotone is proportional to $c-c_\IR$ to quadratic order along a single stable direction near a two-dimensional IR fixed point, with a nonuniversal coefficient. Complementarily, the information discarded from a fixed UV state grows under successive coarse grainings forming a semigroup. Optimizing its reference at a given cutoff yields the UV--IR mutual information, which we also evaluate holographically. Wilsonian RG thus extends to finite-dimensional systems, where its irreversibility is the loss of a resource defined by RG itself.

\medskip
{\em Wilsonian RG as a quantum channel.}---
The Wilsonian RG~\cite{Wilson:1973jj,Polchinski:1983gv} consists of two steps: coarse-graining, which integrates out the high-momentum modes in the path integral, and rescaling, which restores the cutoff to its original value. After a field redefinition, one obtains an effective action with the same cutoff but different couplings; the RG flow is the trajectory of the couplings under iteration.

In~\cite{MartinsCosta:2022bjr}, its quantum-channel formulation was considered.\footnote{A complementary approach is done by~\cite{Goldman:2024cvx}, where the system RG evolution is described by a Lindbladian.} For free fields, each momentum mode is an independent oscillator, so one assumes a tensor factorization $\cH=\bigotimes_k\cH_k$. Splitting the modes at a scale $\Lambda'$ defines $\cH=\cH_<\otimes\cH_>$ with $\cH_<=\bigotimes_{k\le\Lambda'}\cH_k$ and $\cH_>=\bigotimes_{k>\Lambda'}\cH_k$; integrating out the high-momentum modes is the partial trace $\Tr_>$, and the rescaling was treated as a unitary channel, being a mere change of variables in the path integral.

However, it should be noted that a unitary rescaling is only possible when there is no hard IR cutoff, implying the momentum is continuous. Otherwise, $\cH$ and $\cH_<$ are not isomorphic and the rescaling has to map a coarse-grained Hilbert space to a larger original one. 
This matters for lattice systems in a finite volume. To address this issue, we consider a coarse-graining map onto the same Hilbert space and implement the rescaling as an isometric channel. This also allows us to formulate the algebraic RG in light of a conditional expectation as we will discuss later.

\medskip
{\em Coarse-graining channel.}---
We supplement a fixed high-momentum reference state $\sigma_>$ to define the coarse-graining channel $\cR_\sigma$ as
\begin{equation}
    \cR_\sigma(\rho)=(\Tr_{>}\rho)\otimes\sigma_>.
    \label{eq:trace-prepare}
\end{equation}
The partial trace retains the low-momentum observables while $\sigma_>\in\cS(\cH_>)$, a density operator in $\cH_>$, represents the inferred high-momentum state.

This inevitably maps pure states to mixed states if UV--IR entanglement is present. While mixed-state RG flows have been considered in~\cite{Sang:2023rsp}, there also exists a pure-state RG flow such as the density matrix renormalization group (DMRG)~\cite{White:1992zz,White:1993zza} and multi-scale entanglement renormalization ansatz (MERA)~\cite{Vidal:2006sxo}. To understand the relation between these two types of RG flows in the Wilsonian perspective, we rewrite the coarse-graining channel in the Kraus representation, indexed by unitaries $U_>$ on $\cH_>$:
\begin{align}
    \cR_\sigma(\rho)&=\int dU_>\,K_U^{(\sigma)}\rho\,K_U^{(\sigma)\dagger}\\
    K_U^{(\sigma)} &=\sqrt{d_>}\,(\id_<\otimes\sqrt{\sigma_>})(\id_<\otimes U_>),
    \label{eq:mixed-haar}
\end{align}
where $d_>=\dim\cH_>$ and $dU_>$ is normalized Haar measure on $\cH_>$.\footnote{A unitary 1-design also suffices~\cite{Roberts:2016hpo}.}
Intuitively, the Wilsonian coarse-graining first scrambles the high-momentum modes to erase their information and then prepares the fixed reference state $\sigma_>$.

The Kraus representation bridges mixed-state and pure-state RGs. For a pure input, each nonzero Kraus outcome gives a pure conditional state, while its average reproduces $\cR_\sigma(\rho)$. 

\medskip
{\em Rescaling channel.}---
Iterating the coarse-graining with successively lowered cutoffs would replace the entire state by the reference state. To generate a nontrivial flow, the lowered cutoff $\Lambda'$ must be brought back to $\Lambda$. We represent this purity-preserving rescaling by an isometric channel $\cV(\rho)=V\rho V^\dagger$ with $V^\dagger V=\id$, which guarantees that the low-momentum observables $\cB(\cH_<)$ are preserved.

In infinite dimensions the rescaling $p\to rp$, $r=\Lambda/\Lambda'>1$, acts locally in $p$; we similarly construct a global rescaling from local isometries. Let $\cH_<=\bigotimes_{i=0}^{N-1}\cH_i'$ carry $N$ momentum modes, to be rescaled to $\widetilde\cH\cong\cH=\bigotimes_{a=0}^{M-1}\widetilde\cH_a$ with $M>N$ modes. Partition the target sites as $\{0,\dots,M-1\}=\bigsqcup_iC_i$ and choose orthonormal states $\{\ket{\Phi^{(i)}_s}\}_s\subset\widetilde\cH_{C_i}:=\bigotimes_{a\in C_i}\widetilde\cH_a$; then the local maps $V_i=\sum_s\ket{\Phi^{(i)}_s}\bra{s}_i$ are isometries and $V=\bigotimes_iV_i$ is a global isometry. For instance, a piecewise-constant interpolation in the momentum basis corresponds to $V_i=\sum_s\ketbra{ss}{s}$. The field redefinition is a local basis change absorbed in the choice of $\ket{\Phi^{(i)}_s}$.

For the vacuum of a free scalar theory in Minkowski spacetime, the rescaling becomes unitary and the momentum rescaling together with the field redefinition is given by a two-mode squeezing unitary 
$S_{m'\leftarrow m}=\bigotimes_{\bm p}\exp[\frac14\log\frac{\omega_{m'}(\bm p)}{\omega_m(\bm p)}\left(a_m(\bm p)a_m(-\bm p)-a^\dagger_m(\bm p)a^\dagger_m(-\bm p)\right)]$
implementing the mass shift $m^2\to {m^\prime}^2=r^2 m^2$.
Here $\omega_m(\bm p)=\sqrt{\bm p^2+m^2}$ and the momentum product is understood with a chosen regulator. In the thermodynamic limit the rescaling merely compresses the sparse coarse-grained lattice, while the field redefinition restoring the canonical kinetic term induces the mass shift via on-site unitaries.

\medskip
{\em Low-energy algebra.}---
The channel formulation above applies to discrete systems, but it rests on two assumptions: a tensor-factorized Hilbert space, which becomes subtle in the continuum or thermodynamic limit and away from free fields, and the identification of mode momentum with the energy scale. The latter holds for relativistic continuum QFTs in Minkowski spacetime but can fail otherwise; e.g., on a lattice with spacing $a$ the momentum lives in the Brillouin zone with $2\pi/a$ periodicity, so it need not scale with the mode energy even under translational invariance and locality~\cite{Nielsen:1980rz,Nielsen:1981hk}.

To overcome these issues, we treat the low-energy modes algebraically rather than as a tensor factor. The key idea is to identify the low-energy algebra from the one-particle spectrum: a collective excitation carrying large total energy need not lie in the high-energy algebra---classical electromagnetism describes such collective phenomena---and what matters is the mode spectrum of the one-particle Hamiltonian. We propose the following procedure.
\begin{enumerate}
    \item Define single-particle generating operators. Here, a particle does not necessarily mean a physical one. One could choose arbitrary ``elementary'' excitations that account for the RG.
    \item Take its linear span to define the one-particle Hilbert space.
    \item Project the total Hamiltonian onto it to define the one-particle Hamiltonian.
    \item \textbf{Coarse-graining}: Truncate the one-particle spectrum and define the one-particle coarse-grained algebra from its action on a reference state.
    \item \textbf{Rescaling}: Apply an isometry realizing piecewise interpolation through superposed basis.
    \item Take its double commutant (adding the identity and adjoints) to define the low-energy multiparticle algebra $\cA_L$; the high-energy algebra is its relative commutant.
\end{enumerate}
One can confirm that this reproduces the low-energy algebra for free fields, which is given as $O\otimes \id\in \cB(\cH_<\otimes\cH_>)$. For details, refer to the companion paper~\cite{Mori:wip}.

\medskip
{\em Coarse graining as a conditional expectation.}---
Once the retained low-energy algebra is specified, the coarse-graining part of RG can be formulated as a conditional expectation~\cite{takesaki1972conditional,Furuya:2020tzv}.

Let $\cA\subseteq\cB(\cH)$ be a von Neumann algebra of observables and $\cA_L\subseteq\cA$ a retained von Neumann subalgebra with the same identity. We assume that a normal conditional expectation $E:\cA\to\cA_L$ exists. Viewed as a map on $\cA$ via the inclusion $\cA_L\subseteq\cA$, it is a normal, unital, completely positive projection that fixes $\cA_L$ pointwise and satisfies the bimodule property $E(a_Lab_L)=a_LE(a)b_L$ for $a\in\cA$ and $a_L,b_L\in\cA_L$. The retained algebra alone need not determine $E$ uniquely.

For $\cA=\cB(\cH_<\otimes\cH_>)$ and $\cA_L=\cB(\cH_<)\otimes\mathbb C\id_>$, a reference state $\sigma_>$ defines
\begin{equation}
    E_{\sigma_>}(a)
    =\Tr_>\!\left[(\id_<\otimes\sigma_>)a\right]\otimes\id_>.
    \label{eq:cond-exp-I}
\end{equation}
It preserves low-energy observables and averages the discarded sector in $\sigma_>$. The dual channel acts on states as
\begin{equation}
    E_*(\omega_\rho)=\omega_\rho\circ E.
    \label{eq:algebraic-R}
\end{equation}
Thus $E$ acts on observables, while $E_*$ acts on states. In the factorized case,
\[
E_{\sigma_>*}(\rho)=\cR_\sigma(\rho)
=(\Tr_>\rho)\otimes\sigma_>,
\]
recovering Eq.~\eqref{eq:trace-prepare}. In the following, we identify a linear functional $\omega_\rho$ with the corresponding density matrix $\rho$ unless noted.

While a coarse-graining step is a conditional expectation, a single RG step is not, since idempotency fails; for example, the repetition isometry yields a dephasing channel preserving only the diagonal algebra. Nevertheless, a certain long-time average of RG will become a conditional expectation, eventually viewed as a resource-destroying map~\cite{liu2017resource}.

\medskip
{\em Resource theory of RG.}---
A defining feature of RG is its irreversibility: UV information is lost toward the IR, as captured by monotones such as Zamolodchikov's $c$-function~\cite{Zamolodchikov:1986gt} and its entropic versions~\cite{Casini:2004bw,Casini:2006es}, which are, however, constructed case by case. Resource theories provide a unifying principle: once free states and free operations are specified, contractive distances from the free set furnish monotones~\cite{Chitambar-Gour2019resource-theories}. The formulations above allow us to apply this idea directly to RG.

Let
$\cC:\cS(\cH)\longrightarrow\cS(\cH_<)$
be the coarse-graining map, e.g.,
$\cC=\Tr_>$ in a tensor-factorized description, and let
\[
\cV(\tau)=V\tau V^\dagger:\cS(\cH_<)\rightarrow\cS(\cH),
\; V^\dagger V=\id_<,\; P=VV^\dagger
\]
be the rescaling channel. Hereafter a dagger denotes the Heisenberg dual.
One full RG step and the reduced channel are
\[
\cG:=\cV\circ\cC,\qquad
\widehat{\cG}
    :=
    \cC\circ\cV.
    \label{eq:reduced-RG-map}
\]
Their iterates satisfy
\begin{equation}
    \cG^n
    =
    \cV\circ
    \widehat{\cG}^{n-1}
    \circ\cC,
    \quad n\geq1.
    \label{eq:RG-power-factorization}
\end{equation}
Thus, all nontrivial iteration takes place in the coarse-grained space $\cH_<$.

The free states of the full RG map are its fixed-point sets, namely,
\begin{equation}
    \cF(\cG)
    :=
    \{\rho\mid\cG(\rho)=\rho\}=\cV\!\left(\cF(\widehat\cG)\right).
    \label{eq:rg-free-states}
\end{equation}

The resource-destroying map~\cite{liu2017resource} is the Ces\`aro mean
\begin{equation}
    \cG_\infty
    :=
    \lim_{N\to\infty}
    \frac1N\sum_{n=1}^N\cG^{\,n}.
    \label{eq:cesaro-mean}
\end{equation}
Writing $\widehat\cG_\infty$ for the analogous reduced mean, Eq.~\eqref{eq:RG-power-factorization} gives
\begin{equation}
    \cG_\infty=\cV\circ\widehat\cG_\infty\circ\cC.
    \label{eq:cesaro-factorization}
\end{equation}
These averages exist in finite dimensions even when the iterates oscillate, as in limit cycles~\cite{Wilson:1970ag,Glazek:2002hq,Bulycheva:2014twa}. They are channel projections onto the fixed states~\cite{wolf2012qchannels,petz2008quantum}:
\begin{equation}
    \cG_\infty^2=\cG\cG_\infty=\cG_\infty\cG=\cG_\infty,
    \label{eq:rg-free-states-cesaro-mean}
\end{equation}
so averaging removes the resource while preserving every free state.

Suppose that $\widehat{\cG}$ admits a full-rank fixed state
$\widehat\rho_*$.\footnote{Every output of $\cG$ is supported on $P$. Thus, for $P\neq\id$, no fixed state is full rank on $\cH$. The reduced assumption instead makes $V\widehat\rho_*V^\dagger$ faithful on $P\cB(\cH)P$.} Then the fixed-point observables form an algebra~\cite{wolf2012qchannels}
\begin{equation}
    \widehat{\cA}_{\cG}
    :=
    \left\{
    a\in\cB(\cH_<)
    \,\middle|\,
    \widehat{\cG}^{\,\dagger}(a)=a
    \right\}.
    \label{eq:reduced-fixed-algebra}
\end{equation}
$\widehat\Pi_\cG:=\widehat\cG_\infty^\dagger$ is the conditional expectation onto it, preserving $\widehat\rho_*$~\cite{takesaki1972conditional}. 
The Heisenberg-picture Ces\`aro mean of the full RG map is
\begin{equation}
    \Pi_\cG=\cG_\infty^\dagger
    =\cC^\dagger\circ\widehat\Pi_\cG\circ\cV^\dagger.
    \label{eq:full-cesaro-CE}
\end{equation}
For $\cC=\Tr_>$, this reads
\begin{equation}
    \Pi_\cG(a)=\widehat\Pi_\cG(V^\dagger aV)\otimes\id_>
\end{equation}
and it is a conditional expectation onto the fixed-point algebra $\cA_\cG=\widehat\cA_\cG\otimes\mathbb C\id_>$.
Invariant-algebra projections have also appeared in real-space RG~\cite{Furuya:2020tzv}, microscopicity~\cite{nagasawa2025macroscopicity}, and related resource theories~\cite{yoeli2026instability}.
In this partial-trace setting with a faithful reduced fixed state, the Heisenberg-picture Ces\`aro mean is a conditional expectation, although the dual of a single RG step generally is not. 

The free operations are channels $\cE$ commuting with $\cG_\infty$. This includes every channel commuting with $\cG$, and hence $\cG$ itself.
Let
$D_{\cA}$ be the Araki relative entropy~\cite{araki1976relative}, which reduces in finite dimensions to the Umegaki relative entropy $D(\rho\|\sigma)=\tra{\rho(\log\rho-\log\sigma)}$~\cite{umegaki1962conditional}.
We define the \emph{relative entropy of RG}
\begin{equation}
    \cI_\cG(\rho)
    :=
    D_\cA\!\left(
    \rho\,\middle\|\,\cG_\infty(\rho)
    \right).
    \label{eq:rg-monotone}
\end{equation}
It compares a state with the fixed-point state of its own RG trajectory.

\begin{theorem}[RG resource monotone]
\label{th:rg-resource-monotone}
For the finite-dimensional construction above:

(i) $\cI_\cG(\rho)\geq0$, with equality exactly on $\cF(\cG)$;

(ii) $\cI_\cG(\cE(\rho))\leq\cI_\cG(\rho)$ for every free operation $\cE$;

(iii) if $\widehat\cG$ has a faithful fixed state, then
\begin{equation}
    R(\rho):=\min_{\eta\in\cF(\cG)}D_\cA(\rho\|\eta)
    =\cI_\cG(\rho).
    \label{eq:rg-min-distance}
\end{equation}
For $\rho=P\rho P$, the minimum is finite and uniquely attained at $\cG_\infty(\rho)$. Otherwise all comparisons are infinite.
\end{theorem}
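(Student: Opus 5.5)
The proof rests on three facts about the Ces\`aro mean stated above: $\cG_\infty$ is an idempotent channel with $\cG_\infty\cG=\cG\cG_\infty=\cG_\infty$, as in Eq.~\eqref{eq:rg-free-states-cesaro-mean}; its image is exactly $\cF(\cG)$; and relative entropy is monotone under channels. For the image claim, $\cG_\infty(\rho)$ is fixed by $\cG$, and conversely $\cG(\eta)=\eta$ implies $\cG_\infty(\eta)=\eta$.

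\textbf{Part (i).} Nonnegativity is Klein's inequality, since $\cG_\infty(\rho)$ is a state. If $\cI_\cG(\rho)=0$, then $\rho=\cG_\infty(\rho)\in\cF(\cG)$, by faithfulness of relative entropy (Klein's inequality is strict unless the states coincide). If $\rho\in\cF(\cG)$, then $\cG_\infty(\rho)=\rho$ and $\cI_\cG(\rho)=0$.

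\textbf{Part (ii).} For a free operation $\cE$ we have $\cE\circ\cG_\infty=\cG_\infty\circ\cE$. Then
\[
\cI_\cG(\cE(\rho))=D\big(\cE(\rho)\,\big\|\,\cG_\infty(\cE(\rho))\big)=D\big(\cE(\rho)\,\big\|\,\cE(\cG_\infty(\rho))\big)\le D\big(\rho\,\big\|\,\cG_\infty(\rho)\big),
\]
where the inequality is the data-processing inequality (Lindblad--Uhlmann) for $\cE$. This holds with the convention $+\infty$ when supports fail.

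\textbf{Part (iii).} The inequality $R\le\cI_\cG$ holds because $\cG_\infty(\rho)\in\cF(\cG)$. For the reverse direction I will prove a Pythagorean identity: for every $\eta\in\cF(\cG)$,
\[
D(\rho\|\eta)=D(\rho\|\cG_\infty(\rho))+D(\cG_\infty(\rho)\|\eta).
\]
This gives $R=\cI_\cG$, and uniqueness follows from faithfulness of the second term. The identity reduces to $\Tr[\rho\log\eta]=\Tr[\cG_\infty(\rho)\log\eta]$ and $\Tr[\rho\log\cG_\infty(\rho)]=\Tr[\cG_\infty(\rho)\log\cG_\infty(\rho)]$, i.e.\ $\log$ of any fixed state (on its support) is an invariant observable of $\Pi_\cG=\cG_\infty^\dagger$.

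To show this invariance I would use the structure theorem for the reduced map. With a faithful fixed state $\widehat\rho_*$, the fixed algebra $\widehat\cA_\cG$ is a $*$-algebra and $\widehat\Pi_\cG$ is the $\widehat\rho_*$-preserving conditional expectation onto it. The fixed states of $\widehat\cG$ are then exactly $\widehat\Pi_{\cG*}(\tau)$. In the decomposition $\cH_<=\bigoplus_k \cH_{k}^{(1)}\otimes\cH_k^{(2)}$ with $\widehat\cA_\cG=\bigoplus_k\cB(\cH_k^{(1)})\otimes\id$, every fixed state has the form $\bigoplus_k p_k\,\tau_k\otimes\omega_k$ with fixed $\omega_k$ (Wolf's classification). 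Hence $\log\widehat\eta=\bigoplus_k(\log p_k\tau_k\otimes\id+\id\otimes\log\omega_k)$, where the second term is the same for all fixed states. Consequently $\Tr[\widehat\rho\log\widehat\eta]=\Tr[\widehat\Pi_{\cG*}(\widehat\rho)\log\widehat\eta]$ follows from the bimodule property together with $\widehat\Pi_{\cG*}$ preserving the $\omega_k$ parts.

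Lifting to the full space: for $\rho=P\rho P$, write $\rho=V\widehat\rho V^\dagger$. Then $\cG_\infty(\rho)=V\widehat\cG_\infty(\cC(\rho))V^\dagger$ by Eq.~\eqref{eq:cesaro-factorization}, and free states are $V\widehat\eta V^\dagger$ by Eq.~\eqref{eq:rg-free-states}. The relative entropies are computed on $P\cH$, where everything is faithful, so all terms are finite. A subtlety is that $\cC(V\widehat\rho V^\dagger)=\widehat\cG(\widehat\rho)$ rather than $\widehat\rho$. I would handle this by noting $\widehat\cG_\infty\circ\widehat\cG=\widehat\cG_\infty$ and applying the identity to $D(\widehat\rho\|\widehat\eta)$ directly, so that the reference $\widehat\cG_\infty(\widehat\cG(\widehat\rho))=\widehat\cG_\infty(\widehat\rho)$.

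If $\rho\ne P\rho P$, then $\mathrm{supp}\,\rho\not\subseteq P\cH\supseteq\mathrm{supp}\,\eta$ for every free $\eta$, so every $D(\rho\|\eta)=\infty$, including $\eta=\cG_\infty(\rho)$.

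The main obstacle is the Pythagorean step: showing that $\log$ of an arbitrary fixed state is $\Pi_\cG$-invariant. This requires the full structure theorem, including the fixed non-algebra factors $\omega_k$, rather than just the conditional-expectation property.
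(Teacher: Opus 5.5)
Your overall structure matches the paper's. Faithfulness of relative entropy gives (i), and data processing with $\cE\circ\cG_\infty=\cG_\infty\circ\cE$ gives (ii). For (iii) you use a Pythagorean identity for the $\widehat\rho_*$-preserving conditional expectation $\widehat\Pi_\cG$, lifted through $V$, plus the support argument. Your use of $\widehat\cG_\infty\circ\widehat\cG=\widehat\cG_\infty$ to handle $\cC(V\widehat\rho V^\dagger)=\widehat\cG(\widehat\rho)$ is correct, and it makes explicit a step the paper passes over.

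The gap is in how you prove the Pythagorean identity. You reduce it to two separate equalities, $\tra{\rho\log\eta}=\tra{\cG_\infty(\rho)\log\eta}$ and $\tra{\rho\log\cG_\infty(\rho)}=\tra{\cG_\infty(\rho)\log\cG_\infty(\rho)}$. That amounts to $\Pi_\cG$-invariance of the logarithm of every fixed state, which is false whenever some multiplicity state $\omega_k$ is not maximally mixed. Your own decomposition shows why. First, $\log\widehat\eta$ contains $\bigoplus_k\id\otimes\log\omega_k$, which is not in $\widehat\cA_\cG=\bigoplus_k\cB(\cH_k^{(1)})\otimes\id$. Second, $\widehat\Pi_{\cG*}$ does \emph{not} preserve the multiplicity part of a general input. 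With $P_k$ the projection onto the $k$-th block, it replaces the marginal $\operatorname{Tr}_{\cH_k^{(1)}}(P_k\widehat\rho P_k)$ by $\tra{P_k\widehat\rho}\,\omega_k$. Hence $\tra{\widehat\rho\,(\id\otimes\log\omega_k)}\neq\tra{\widehat\Pi_{\cG*}(\widehat\rho)\,(\id\otimes\log\omega_k)}$ in general.

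For a concrete counterexample, take the replacement channel $\widehat\cG(\tau)=\operatorname{Tr}(\tau)\,\omega$ on a qubit with $\omega=\mathrm{diag}(\tfrac23,\tfrac13)$. It is realizable as $\cC\circ\cV$ with $\cC$ a partial trace and $V$ a Stinespring isometry. It has the faithful fixed state $\omega$, with $\cF(\widehat\cG)=\{\omega\}$ and $\widehat\cA_\cG=\mathbb C\id$. For $\widehat\rho=|0\rangle\langle0|$, both of your equalities read $\log\tfrac23=\tfrac23\log\tfrac23+\tfrac13\log\tfrac13$, which is false. The Pythagorean identity itself holds trivially there.

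What you need, and what is true, is only the combined statement $\tra{(\widehat\rho-\widehat\cG_\infty(\widehat\rho))(\log\widehat\cG_\infty(\widehat\rho)-\log\widehat\eta)}=0$. Your remark that the $\id\otimes\log\omega_k$ term is common to all fixed states is the right ingredient, but it must be applied to the difference.
\begin{itemize}
\item The average $\widehat\cG_\infty(\widehat\rho)=\bigoplus_k\operatorname{Tr}_{\cH_k^{(2)}}(P_k\widehat\rho P_k)\otimes\omega_k$ is itself a fixed state, so those terms cancel.
\item Compress to the support of $\widehat\cG_\infty(\widehat\rho)$. This support contains that of $\widehat\rho$ and is cut out by a projection in $\widehat\cA_\cG$.
\item After compression, $b:=\log\widehat\cG_\infty(\widehat\rho)-\log\widehat\eta$ lies in $\widehat\cA_\cG$, so $\tra{\widehat\rho\,b}=\tra{\widehat\rho\,\widehat\Pi_\cG(b)}=\tra{\widehat\Pi_{\cG*}(\widehat\rho)\,b}$.
\end{itemize}
This is Petz's Pythagorean theorem for a reference state invariant under the conditional expectation. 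The paper invokes it directly via $\omega_\zeta=\omega_\zeta\circ\widehat\Pi_\cG$.

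Two smaller repairs are also needed.
\begin{itemize}
\item If $D(\widehat\cG_\infty(\widehat\rho)\|\widehat\eta)=\infty$, the identity still holds. Data processing under $\widehat\cG_\infty$, which fixes $\widehat\eta$, forces $D(\widehat\rho\|\widehat\eta)=\infty$.
\item A general free $\eta$ is not faithful on $P\cH$, so ``everything is faithful, so all terms are finite'' is wrong as stated. Finiteness of $\cI_\cG(\rho)$ should come, as in the paper, from the identity applied with $\eta=V\widehat\rho_*V^\dagger$.
\end{itemize}
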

\noindent
Parts (i)--(ii) need no faithful fixed state. Part (iii) uses only the reduced expectation, so it holds for general $\cC$. In particular,
\begin{equation}
    \cI_\cG(V\tau V^\dagger)
    =D\!\left(\tau\,\middle\|\,\widehat\cG_\infty(\tau)\right).
    \label{eq:rg-supported-resource}
\end{equation}
Every input reaches this support after one RG step. Under (iii), the relative-entropy resource measure~\cite{Chitambar-Gour2019resource-theories} is therefore obtained by averaging, without an optimization over free states. Proofs are given in the Supplemental Material.

\medskip
{\em Relation to the $c$-theorem.}---We now relate the resource monotone to RG monotones discussed in physics.

Consider smooth faithful thermal states $\rho(g)=e^{-\beta H(g)}/Z(g)$ on the fixed reduced space $\cH_<$, at fixed inverse temperature $\beta$, spatial volume, and UV regulator. Throughout this comparison, $V$ and $\widehat\cG$ are fixed, and all operators and traces act on $\cH_<$. For couplings $\{g^i\}$, the Hamiltonian of the $d$-dimensional quantum system is
\begin{equation}
    H(g)=H_0+\sum_i g^i\int_\Sigma d^{d-1}\bm x\,\cO_i(0,\bm x)
\end{equation}
where $H_0$ is a fixed reference Hamiltonian, $\Sigma$ is a codimension-one Cauchy slice, and $\cO_i(\tau,\bm{x})$ is a local operator at Euclidean time $\tau$ and spatial position $\bm{x}$.

The relative entropy expands as
\begin{equation}
    D\big(\rho(g+dg)\big\|\rho(g)\big)=\frac12\chi_{ij}(g)\,dg^idg^j+O(dg^3),
    \label{eq:BKM-expansion}
\end{equation}
where $\chi_{ij}$ is known as the Bogoliubov--Kubo--Mori (BKM) information metric~\cite{petz2008quantum}. For a thermal state, the derivative with respect to $g^i$ is given by $\partial_i\rho=-\beta\,\Omega_\rho(\delta V_i)$. Here $V_i=\partial_iH=\int_\Sigma d^{d-1}\bm x\,\cO_i(0,\bm x)$, $\delta V_i=V_i-\expval{V_i}_\rho$, and $\Omega_\rho(A)=\int_0^1ds\,\rho^sA\rho^{1-s}$. The metric is then
\begin{equation}
    \chi_{ij}=\beta\int_0^\beta\! d\tau\!\int_\Sigma\! d^{d-1}\bm x\,d^{d-1}\bm y\,\expval{\cO_i(\tau,\bm x)\cO_j(0,\bm y)}_{\beta,c}
    \label{eq:BKM-Euclidean-correlator}
\end{equation}
where the subscript $c$ means the connected two-point function. The detailed derivation is given in the Supplemental Material.

In two-dimensional unitary field theory, the Zamolodchikov metric $g^{\rm Zam}_{ij}$ is defined by vacuum two-point functions of the operators conjugate to the couplings at nonzero separation, with appropriate scale factors~\cite{Zamolodchikov:1986gt,Friedan:2009ik}. The stress-tensor trace correlator gives
\begin{equation}
    \frac{dc}{ds}=-\kappa g^{\rm Zam}_{ij}\beta^i\beta^j.
    \label{eq:c-function-slope}
\end{equation}
Here $s=-\log\Lambda$ increases toward the IR, $\beta^i=dg^i/ds$, and $\kappa>0$ fixes the normalization; $c$ equals the central charge at fixed points. We use this field-theoretic identity for the comparison.

Assume that $\widehat\cG$ realizes a smooth physical RG trajectory $g^i(u)=g_*^i+uv^i+O(u^2)$, with $du/ds=-\omega u+O(u^2)$, $\omega>0$, and an IR fixed-point state
$\widehat\cG_\infty(\rho(g(u)))=\rho_*:=\rho(g_*)>0$. The tangent $v$ is a stable RG eigen-direction. Eq.~\eqref{eq:rg-supported-resource} and expanding the relative entropy give
\begin{equation}
\begin{aligned}
    \cI_\cG(V\rho(g(u))V^\dagger)
    &=D\bigl(\rho(g(u))\big\|\rho_*\bigr)\\
    &=\tfrac12\chi_{vv,*}u^2+O(u^3),
\end{aligned}
    \label{eq:I-Zam}
\end{equation}
where $\chi_{vv,*}:=v^i\chi_{ij}(g_*)v^j$. Assume that both metrics are smooth and have finite, positive components along $v$, in the same coupling coordinates. Define
\begin{equation}
\begin{gathered}
    C_{*,v}:=\frac{\chi_{vv,*}}{g^{\rm Zam}_{vv,*}}>0,\\
    g^{\rm Zam}_{vv,*}:=v^ig^{\rm Zam}_{ij}(g_*)v^j.
\end{gathered}
    \label{eq:BKM-Zam}
\end{equation}
This ratio depends on the direction and thermal regularization. Integrating Eq.~\eqref{eq:c-function-slope} along this trajectory yields
\begin{equation}
    c(u)-c_{\rm IR}=\frac{\kappa\omega}{2}g^{\rm Zam}_{vv,*}u^2+O(u^3),
    \label{eq:c-near-IR-direction}
\end{equation}
where $c_{\rm IR}=c(g_*)$. Consequently,
\begin{equation}
\begin{aligned}
    &\cI_\cG(V\rho(g(u))V^\dagger)\\
    &\quad=\frac{C_{*,v}}{\kappa\omega}\bigl(c(u)-c_{\rm IR}\bigr)+O(u^3).
\end{aligned}
    \label{eq:IG-c-near-IR}
\end{equation}
This is a local comparison to quadratic order along one stable direction, with a nonuniversal positive coefficient. See the Supplemental Material for a more detailed derivation and regularization dependence.

\medskip
{\em Monotone from coarse graining.}---
We now fix the UV state $\rho$ and vary the retained algebra $\cA_\Lambda\subseteq\cA$. Choose a conditional expectation $E_\Lambda:\cA\to\cA_\Lambda$ at each cutoff and set $\rho_\Lambda=E_{\Lambda*}(\rho)$. The \emph{relative entropy of coarse-graining} (RECG) is
\begin{equation}
    \cI_{E_\Lambda}(\rho)=D_\cA(\rho\|\rho_\Lambda).
    \label{eq:monotone}
\end{equation}
Both states live on the UV algebra and a common isometric embedding leaves the value unchanged. Unlike $\cI_\cG$, which measures a separation from the IR fixed point, RECG measures information discarded from a fixed UV input.

We work in finite dimensions below. For retained and discarded sectors $L$ and $H$, let $\rho_L=\Tr_H\rho$ and $\rho_H=\Tr_L\rho$. The trace-and-prepare channel $E_{\sigma_H*}(\rho)=\rho_L\otimes\sigma_H$ gives
\begin{equation}
    \cI_{E_{\sigma_H}}(\rho)
    =I_\rho(L\!:\!H)+D(\rho_H\|\sigma_H),
    \label{eq:decomposition}
\end{equation}
where $I_\rho(L\!:\!H)=S(\rho_L)+S(\rho_H)-S(\rho)$. The two terms measure discarded UV--IR correlations and the mismatch with the estimated UV prior.

\begin{theorem}[Optimal reference state]
\label{th:optimal-reference}
For a fixed factorization and state $\rho$,
\begin{equation}
    \min_{\sigma_H}\cI_{E_{\sigma_H}}(\rho)=I_\rho(L\!:\!H),
\end{equation}
with the unique minimizer $\sigma_H=\rho_H$.
\end{theorem}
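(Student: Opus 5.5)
The plan is to rely on the decomposition in Eq.~\eqref{eq:decomposition}, so the minimization reduces to a single term. Before that, I will check the decomposition itself in the finite-dimensional setting. Write $\rho_\Lambda=\rho_L\otimes\sigma_H$ and use $\log(\rho_L\otimes\sigma_H)=\log\rho_L\otimes\id_H+\id_L\otimes\log\sigma_H$, which holds on the support. Then
\begin{equation}
D(\rho\|\rho_L\otimes\sigma_H)=-S(\rho)-\tra{\rho_L\log\rho_L}-\tra{\rho_H\log\sigma_H}.
\end{equation}
Adding and subtracting $\tra{\rho_H\log\rho_H}$ gives $I_\rho(L\!:\!H)+D(\rho_H\|\sigma_H)$.

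The support cases need separate treatment. If $\operatorname{supp}\rho_H\not\subseteq\operatorname{supp}\sigma_H$, then $\operatorname{supp}\rho\not\subseteq\operatorname{supp}(\rho_L\otimes\sigma_H)$. This follows because $\operatorname{supp}\rho\subseteq\operatorname{supp}\rho_L\otimes\operatorname{supp}\rho_H$. Both sides of the decomposition are then $+\infty$, so it holds as an identity in $[0,\infty]$. In the remaining case every logarithm is evaluated on the relevant supports and the algebra above is valid.

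The first term $I_\rho(L\!:\!H)$ depends only on $\rho$ and the fixed factorization, not on $\sigma_H$. Minimizing over $\sigma_H$ is therefore the same as minimizing $D(\rho_H\|\sigma_H)$ over density operators on $\cH_H$. Klein's inequality gives $D(\rho_H\|\sigma_H)\ge0$, and equality holds if and only if $\sigma_H=\rho_H$; this is faithfulness of the Umegaki relative entropy. The minimum is thus attained, its value is $I_\rho(L\!:\!H)$, and the minimizer $\sigma_H=\rho_H$ is unique. Any other $\sigma_H$ gives a strictly positive or infinite second term.

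I expect the only delicate point to be the support bookkeeping when $\rho_H$ or $\sigma_H$ is not full rank. If I want to avoid the case analysis, I will instead restrict to $\sigma_H$ with $\operatorname{supp}\rho_H\subseteq\operatorname{supp}\sigma_H$. All other $\sigma_H$ give an infinite value and cannot be minimizers, since $\sigma_H=\rho_H$ already achieves the finite value $I_\rho(L\!:\!H)\le 2\log\min(d_L,d_H)$.
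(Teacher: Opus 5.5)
Your proposal is correct and is essentially the paper's proof: you use the decomposition $D(\rho\|\rho_L\otimes\sigma_H)=I_\rho(L\!:\!H)+D(\rho_H\|\sigma_H)$ and then nonnegativity and faithfulness of $D(\rho_H\|\sigma_H)$, and your support bookkeeping is extra care that the paper omits. One small correction there: the containment $\operatorname{supp}\rho\subseteq\operatorname{supp}\rho_L\otimes\operatorname{supp}\rho_H$ is what justifies the finite case, whereas the infinite case follows by taking $\Tr_L$ (if $\rho$ were supported in $\operatorname{supp}\rho_L\otimes\operatorname{supp}\sigma_H$, then $\operatorname{supp}\rho_H\subseteq\operatorname{supp}\sigma_H$).
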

\noindent
The optimal reference depends on the input, so $\rho\mapsto\rho_L\otimes\rho_H$ is generally nonlinear.

We now show monotonicity along a tower of conditional expectations. For cutoffs $\Lambda_2<\Lambda_1$, suppose $\cA_{\Lambda_2}\subseteq\cA_{\Lambda_1}$ and write $E_a=E_{\Lambda_a}$. Inclusion gives $E_1\circ E_2=E_2$. We also require the semi-group condition,
\begin{equation}
    E_2\circ E_1=E_2.
    \label{eq:CE-semigroup}
\end{equation}
Taking duals reverses the order: $E_{1*}E_{2*}=E_{2*}$. Thus the coarser state is invariant under the finer coarse-graining.

\begin{theorem}[Coarse-graining monotone]
\label{thm:rg-monotone}
For compatible conditional expectations as above, let $\rho_a=E_{a*}(\rho)$. Then
\begin{equation}
\begin{aligned}
    \cI_{E_2}(\rho)
    &=\cI_{E_1}(\rho)+D_\cA(\rho_1\|\rho_2)\\
    &\geq\cI_{E_1}(\rho).
\end{aligned}
    \label{eq:pythagoras-tower}
\end{equation}
The identity includes infinite values. If $\cI_{E_1}(\rho)<\infty$, equality holds exactly when $\rho_1=\rho_2$.
\end{theorem}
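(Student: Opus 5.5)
The plan is to prove the Pythagorean identity directly in finite dimensions, using the Umegaki form $D(\rho\|\sigma)=\tra{\rho\log\rho}-\tra{\rho\log\sigma}$ and the bimodule property of $E_1$. We first fix how states relate to conditional expectations. For a conditional expectation $E_a$ onto $\cA_a$, the dual state $\rho_a=E_{a*}(\rho)$ is the state with $\tra{\rho_a x}=\tra{\rho E_a(x)}$ for $x\in\cA$. We identify states with density matrices via a fixed faithful trace on $\cA$; in finite dimensions $\cA$ is a direct sum of matrix algebras. The key structural fact we need is that $\log\rho_2$, on the support of $\rho_2$, is a (possibly unbounded-by-$-\infty$) element lying in the algebra $\cA_1$, or more precisely is fixed by $E_1$ in the appropriate sense.

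To obtain it, we use $E_{1*}E_{2*}=E_{2*}$, so $\rho_2=E_{1*}(\rho_2)$ is invariant under $E_{1*}$. We then want $\tra{\rho\log\rho_2}=\tra{\rho_1\log\rho_2}$, i.e.\ $\tra{\rho\,X}=\tra{\rho\,E_1(X)}$ with $X=\log\rho_2$ and $E_1(X)=X$. In the trace-and-prepare model this is transparent: $\rho_2=(\rho_{L_2})\otimes\sigma$ and its logarithm splits into terms measurable in the finer algebra plus $\log$ of the reference, which $E_1$ treats compatibly. In general, Takesaki's theorem / the structure of conditional expectations in finite dimensions gives $E_{1*}(\rho)=\rho_1$ of the form $\rho_1=\sum$ (compressions of $\rho$ to $\cA_1$) $\cdot\,\omega$ with a fixed "reference density" $\omega_1$ commuting with $\cA_1$ and invariant under the modular group. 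Any $E_{1*}$-fixed density then has logarithm equal to $Y+\log\omega_1$ with $Y\in\cA_1$ and $\log\omega_1$ in $\cA_1'$. Combining this with $\tra{\rho\,\cdot}$ on $\cA_1'$-parts gives the identity $\tra{\rho\log\rho_2}=\tra{\rho_1\log\rho_2}$ and likewise $\tra{\rho\log\rho_1}=\tra{\rho_1\log\rho_1}$. Establishing this cleanly is the main obstacle. I would first try the direct route via Petz's criterion: $\rho_2$ fixed by $E_{1*}$ means $E_1$ preserves the state $\rho_2$, hence commutes with its modular group, which gives $E_1(\log\rho_2-\log\rho_1\text{-type terms})$ formulas. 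If that fails, I would fall back to the explicit block decomposition $\cA_1=\bigoplus_k \cB(\cH_k)\otimes\id_{\cK_k}$, $E_{1*}(\rho)=\bigoplus_k \Tr_{\cK_k}(\rho_k)\otimes\omega_k$.

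Granting the identities, the computation is
\[
D(\rho\|\rho_2)=\tra{\rho\log\rho}-\tra{\rho\log\rho_1}+\tra{\rho_1\log\rho_1}-\tra{\rho_1\log\rho_2},
\]
which is $D(\rho\|\rho_1)+D(\rho_1\|\rho_2)$. For infinite values we handle supports separately: $\mathrm{supp}\,\rho\subseteq\mathrm{supp}\,\rho_1$ always holds (the support projection of a state is dominated by that of its conditional-expectation image). Hence $\cI_{E_1}(\rho)$ is finite, and the only possible infinity arises when $\mathrm{supp}\,\rho\not\subseteq\mathrm{supp}\,\rho_2$. We show this happens iff $\mathrm{supp}\,\rho_1\not\subseteq\mathrm{supp}\,\rho_2$, by applying $E_{1*}$ to the support projection and using $E_{1*}\rho_2=\rho_2$. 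Both sides are then $+\infty$ simultaneously.

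The inequality follows from nonnegativity of relative entropy. When $\cI_{E_1}(\rho)<\infty$, equality forces $D(\rho_1\|\rho_2)=0$, which holds iff $\rho_1=\rho_2$ by faithfulness of the Umegaki relative entropy. Conversely, $\rho_1=\rho_2$ gives equality trivially.
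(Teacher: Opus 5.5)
Your skeleton matches the paper's: the semigroup condition makes $\rho_2$ invariant under $E_{1*}$, and the Pythagorean identity for $E_1$ finishes the argument (the paper simply cites this identity from Petz). The gap is in how you prove that identity by hand. You assert $\tra{\rho\log\rho_2}=\tra{\rho_1\log\rho_2}$ and $\tra{\rho\log\rho_1}=\tra{\rho_1\log\rho_1}$, i.e.\ $E_1(\log\rho_a)=\log\rho_a$. This fails in general. It holds for all $\rho$ only when the reference densities are maximally mixed, i.e.\ when $E_1$ is trace-preserving.

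Your own decomposition $\log\rho_a=Y_a+\log\omega_1$, with $Y_a\in\cA_1$ and $\log\omega_1\in\cA_1'$, shows why. The states $\rho$ and $\rho_1$ agree only on $\cA_1$, not on $\cA_1'$; for example, $E_{\sigma_H}(\id\otimes\log\sigma_H)=\tra{\sigma_H\log\sigma_H}\,\id$. Concretely, for the trace-and-prepare expectation your second identity would give $\cI_{E_{\sigma_H}}(\rho)=S(\rho_L\otimes\sigma_H)-S(\rho)$. That contradicts Eq.~\eqref{eq:decomposition} whenever $\tra{\rho_H\log\sigma_H}\neq\tra{\sigma_H\log\sigma_H}$. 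The missing idea is that $\rho_1$ and $\rho_2$ are fixed by the \emph{same} $E_{1*}$, so they carry the \emph{same} $\cA_1'$-part $\log\omega_1$. That part cancels in $\log\rho_1-\log\rho_2\in\cA_1$ (on supports). Only the difference identity $\tra{(\rho-\rho_1)(\log\rho_1-\log\rho_2)}=0$ holds, and that is exactly what your final display requires.

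Your support discussion also relies on a false lemma: $\mathrm{supp}\,\rho\subseteq\mathrm{supp}\,\rho_1$ holds when $E_1$ is faithful, but not in general. Nothing in the setting forces faithfulness. For example, $E_{\sigma_H}$ with $\sigma_H$ not of full rank gives $\cI_{E_1}(\rho)=\infty$ whenever $\mathrm{supp}\,\rho_H\not\subseteq\mathrm{supp}\,\sigma_H$, which is why the statement explicitly includes infinite values.

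So you still owe the implication $\mathrm{supp}\,\rho\subseteq\mathrm{supp}\,\rho_2\Rightarrow\mathrm{supp}\,\rho\subseteq\mathrm{supp}\,\rho_1$, i.e.\ $\cI_{E_1}(\rho)=\infty\Rightarrow\cI_{E_2}(\rho)=\infty$. It does hold. Invariance $\rho_2=E_{1*}(\rho_2)$ forces $\mathrm{supp}\,\rho_2\le f:=\mathrm{supp}\,\omega_1\in\cA_1'$. For states supported in $f$, the inclusion $\mathrm{supp}\,\rho\subseteq\mathrm{supp}\,E_{1*}(\rho)$ is true.

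Your other support step is correct and needs no faithfulness: showing that $\mathrm{supp}\,\rho\subseteq\mathrm{supp}\,\rho_2$ implies $\mathrm{supp}\,\rho_1\subseteq\mathrm{supp}\,\rho_2$ by applying $E_1$ to the support projection and using $E_{1*}\rho_2=\rho_2$. Your treatment of the equality condition is also correct.
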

\noindent
This follows from the Pythagorean identity for relative entropy~\cite{petz2008quantum}, applied using $E_{1*}(\rho_2)=\rho_2$. Along a compatible tower starting at $E_{\Lambda_\UV}=\mathrm{id}_\cA$,
$0=\cI_{E_{\Lambda_\UV}}(\rho)\leq\cI_{E_{\Lambda_1}}(\rho)\leq\cI_{E_{\Lambda_2}}(\rho)$.
Independent reference optimization need not preserve compatibility, so the theorem does not establish monotonicity of the optimized mutual information. 
RECG increases toward the IR, whereas the RG resource decreases. Their directions agree with $c_\UV-c$ and $c-c_\IR$, respectively, but the RECG implies no quantitative identification with a monotone $c$-function yet. We discuss a possible holographic interpretation below.

\medskip
{\em Holographic realization.}---In AdS/CFT~\cite{Maldacena:1997re}, the radial direction $r$ encodes the boundary energy scale $\Lambda$. Moving the radial cutoff implements holographic renormalization~\cite{Skenderis:2002wp}. Its connection to Wilsonian RG~\cite{Heemskerk:2010hk,Faulkner:2010jy,Balasubramanian:2012hb,Radicevic:2011py,Sin:2011yh,Kim:2020kip,Grozdanov:2011aa,Akhmedov:2010sw} and multi-scale entanglement renormalization ansatz~\cite{Swingle:2009bg,Nozaki:2012zj} has been studied extensively.

Let the metric of the global AdS$_3$ be
\begin{equation}
    ds^2=-(r^2+1)dt^2 + \frac{dr^2}{r^2+1}+r^2d\theta^2
\end{equation}
with the AdS radius rescaled to unity. In these coordinates, the boundary is located at $r\to\infty$.

The relation between scale and radius motivates us to associate the causal-diamond algebras of the bulk subregions
\begin{equation}
    L:\ r\le\Lambda,\qquad H:\ r>\Lambda,
    \label{eq:radial-split}
\end{equation}
with the low- and high-energy algebras, respectively. See Fig.~\ref{fig:hol} for the illustration.
A related interpretation appears in studies of momentum-space entanglement~\cite{Balasubramanian:2011wt}. 

\begin{figure}
    \centering
    \includegraphics[width=0.9\linewidth]{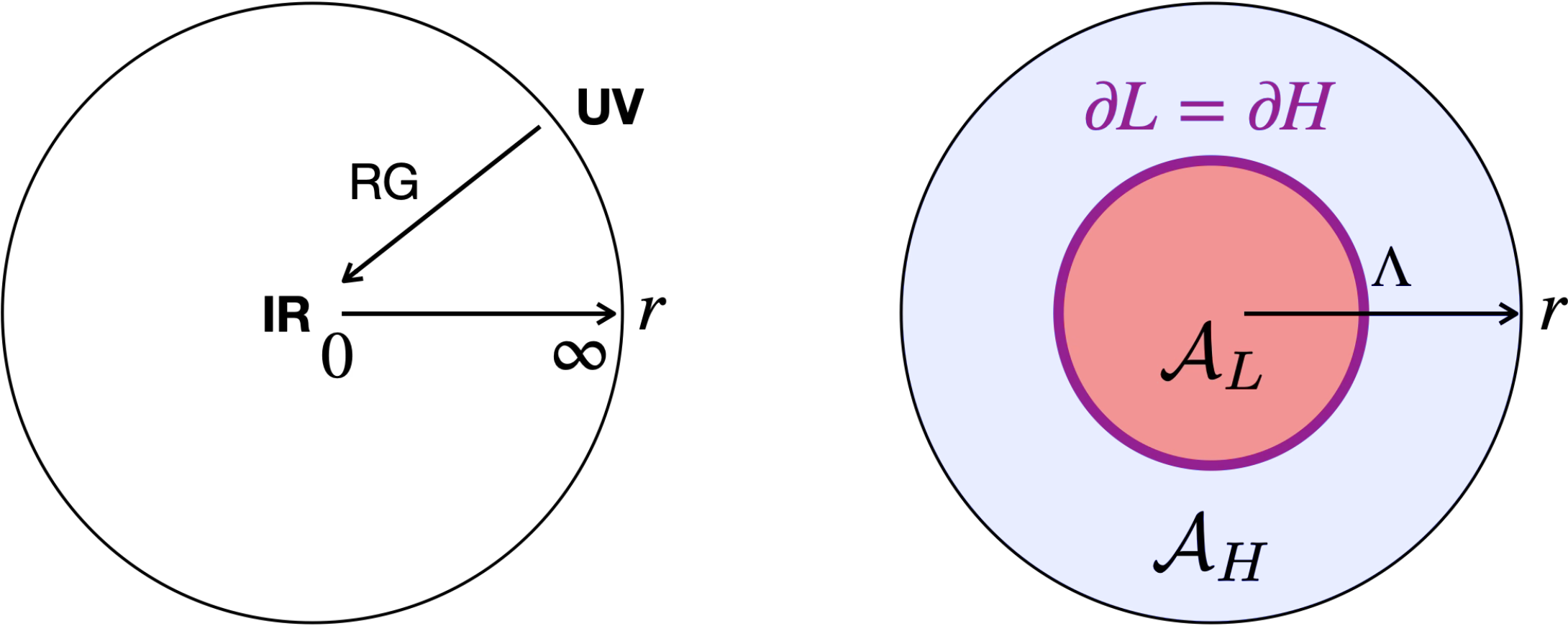}
    \caption{Holographic setup on a time slice of global AdS$_3$. Left: The radial coordinate $r$ sets the RG scale, from the UV at $r=r_\infty$ to the IR at $r=r_{\rm IR}$. 
    Right: $\mathcal{A}_L$ and $\mathcal{A}_H$ are the algebras of $L=\{r\le\Lambda\}$ and its complement $H$, separated at the cutoff $r=\Lambda$.}
    \label{fig:hol}
\end{figure}

Based on this identification, let us compute the RECG holographically. When the product-state construction of Theorem~\ref{th:optimal-reference} applies, the measure is $I_\rho(L\!:\!H)$. Let us assume that each entropy in the measure can be computed by the Ryu--Takayanagi formula~\cite{Ryu:2006bv} for the corresponding bulk subregion.\footnote{See also related works known as `hole-ography'~\cite{Balasubramanian:2013lsa,Czech:2014wka,Headrick:2014eia,Espindola:2017jil}.} 
Here we consider a case when $\rho$ is a pure global vacuum. For more cases including mixed states, see our companion paper~\cite{Mori:wip}. Because the global state is pure, $S(\rho)=0$ and
\begin{equation}
    S(\rho_L)=S(\rho_H)=\frac{\mathrm{Area}(\partial L)}{4G_N} 
    =  \frac{\pi c \Lambda }{3}.
\end{equation}
Hence the RECG for the optimal reference state is given by
\begin{equation}
    I_\rho(L\!:\!H)=\frac{2\pi c}{3}\Lambda.
\end{equation}
Here we used the Brown--Henneaux relation $c=3/(2G_N)$~\cite{Brown:1986nw}. From this, we see that the RECG is proportional to the central charge as expected and is monotonically decreasing as we lower the coarse-graining scale $\Lambda$.\footnote{Note that the monotonic decrease of $I_\rho(L:H)$ as $\Lambda$ decreases does not contradict with Theorem~\ref{thm:rg-monotone}. The former is after a state-dependent optimization over the reference state, thus the theorem does not apply.}

\medskip
\noindent{\em Conclusions.}---
We have developed informational and algebraic RG through coarse graining and isometric rescaling. These formulations connect conditional pure-state trajectories with averaged mixed-state flows and replace tensor-factorized subsystems by retained low-energy algebras. In finite dimensions, the Cesàro mean projects onto RG fixed states even for nonconvergent iterates, providing a resource-destroying map. Relative entropy to this average decreases under channels commuting with it and, when the reduced channel has a faithful fixed state, equals the relative entropy to the free set without optimization.
Under the stated support and trajectory assumptions, the thermal-state resource is proportional to $c-c_{\rm IR}$ to quadratic order along a single stable direction near a two-dimensional IR fixed point, with a nonuniversal coefficient. In contrast, the information discarded from a fixed UV state increases along compatible coarse grainings. Its reference-optimized value, the UV--IR mutual information, is proportional to the central charge for bulk radial algebras in global AdS$_3$.

Several questions remain. Relative entropy diverges under support mismatch. This may be a problem for pure-state flows of free fields; bounded measures based on the fidelity or trace distance may avoid this~\cite{Goldman:2024cvx}. Extending the resource theory of RG to type-II and type-III algebras is also essential as it is relevant to continuum QFT and gravity~\cite{MartinsCosta:MomentumSpaceAlgebras}. There a normal conditional expectation does not necessarily exist so its realization is nontrivial.
Relating this state-based RG to flows of effective actions, e.g., via discrete analogues of exact RG equations, would extend it from states to theories~\cite{Kuwahara:2022nlm,Goldman:2023gzr}. Other directions include complete conversion criteria under free operations and relation to the holographic $c$-theorem~\cite{Freedman:1999gp}.

\medskip
{\em Use of artificial intelligence tools.}---
The authors used OpenAI ChatGPT (GPT-6 Astra and GPT-5.6 Sol) and Anthropic Claude (Fable 5, Opus 4.8, Opus 5.5, and Sonnet 5.5) interactively to assist with mathematical exploration and proof strategies, literature surveys, organization and refinement of arguments, manuscript preparation, and figure generation and refinement. The quantum-channel and algebraic formulations of RG and its resource-theoretic treatment originated with the authors. AI-assisted reasoning contributed substantially to developing the connection to the $c$-theorem, including the core proof strategy. The relevant arguments were cross-checked using multiple models, after which the authors independently verified every step and reconstructed and wrote the proofs themselves. All AI-assisted content was reviewed and modified as necessary by the authors, who take full responsibility for the correctness and content of the manuscript.

\medskip
{\em Acknowledgments.}---We thank Matheus H. Martins Costa, Shoji Hashimoto, Kanta Masuki, Javier Moreno, Ryota Nasu, and Taichi Tanaka for helpful comments and discussion.
T.~M. was supported by the RIKEN TRIP initiative and JSPS KAKENHI Grant Numbers 23KJ1154 and 24K17047. T.~N. acknowledges support from JST ERATO Grant Number JPMJER2402. This work was supported by MEXT KAKENHI Grant-in-Aid for Transformative Research Areas A ``Extreme Universe'' Numbers 21H05182, 21H05183, and 21H05187.

\bibliography{library}

\onecolumngrid
\newpage

\setcounter{page}{1}
\renewcommand{\thepage}{Supplemental Material -- \arabic{page}}
\setcounter{equation}{0}
\renewcommand{\theequation}{S.\arabic{equation}}
\renewcommand{\theHequation}{supp.\arabic{equation}}

\begin{center}
    {\large\bf Supplemental Material for\\[2pt] ``Informational and algebraic renormalization group''}\\[6pt]
    Takato Mori and Teruaki Nagasawa
\end{center}

\section*{Proof of Theorems~\ref{th:rg-resource-monotone},~\ref{th:optimal-reference},~\ref{thm:rg-monotone}}

\begin{proof}[Proof of Theorem~\ref{th:rg-resource-monotone}]
    (i) Nonnegativity and the equality condition of the relative entropy give $\mathcal I_\cG(\rho)=0$ iff $\rho=\cG_\infty(\rho)$, which is equivalent to $\rho\in\cF(\cG)$.
    (ii) Using the data processing inequality,
    \begin{equation}
        \mathcal I_{\cG}(\cE(\rho))
        =D_{\cA}\!\left(\cE(\rho)\,\middle\|\,\cG_{\infty}(\cE(\rho))\right)
        =D_{\cA}\!\left(\cE(\rho)\,\middle\|\,\cE(\cG_{\infty}(\rho))\right)
        \leq D_{\cA}\!\left(\rho\,\middle\|\,\cG_{\infty}(\rho)\right)
        =\mathcal I_{\cG}(\rho).
    \end{equation}
    (iii) Every free state and every output of $\cG_\infty$ is supported on $P=VV^\dagger$. If $\rho\neq P\rho P$, all relative entropies in Eq.~\eqref{eq:rg-min-distance} are infinite. Otherwise write $\rho=V\tau V^\dagger$ and $\eta=V\zeta V^\dagger$, where $\zeta\in\cF(\widehat\cG)$. Equation~\eqref{eq:cesaro-factorization} then gives $\cG_\infty(\rho)=V\widehat\cG_\infty(\tau)V^\dagger$. Every free state $\zeta\in\cF(\widehat\cG)$ obeys $\omega_\zeta=\omega_\zeta\circ\widehat\Pi_\cG$, for the conditional expectation $\widehat\Pi_\cG=\widehat\cG_\infty^\dagger$ onto $\widehat\cA_\cG$, which exists by the faithful reduced fixed-state assumption, so the Pythagorean theorem for the conditional expectation~\cite{petz2008quantum} gives
    \begin{equation}
    \begin{aligned}
        D_{\cA}(\rho\|\eta)&=D(\tau\|\zeta)\\
        &=D\!\left(\tau\,\middle\|\,\widehat\cG_\infty(\tau)\right)+D\!\left(\widehat\cG_\infty(\tau)\,\middle\|\,\zeta\right)\\
        &=\mathcal I_{\cG}(\rho)+D_{\cA}\!\left(\cG_\infty(\rho)\,\middle\|\,\eta\right).
    \end{aligned}
    \end{equation}
    Here we used isometric invariance of relative entropy. Taking the faithful fixed state $\zeta=\widehat\rho_*>0$ makes $D(\tau\|\zeta)$ finite and hence proves finiteness of $\mathcal I_\cG(\rho)$ on $P\cH$. 
    Minimizing over $\eta\in\cF(\cG)$, the second term vanishes exactly at $\eta=\cG_\infty(\rho)\in\cF(\cG)$.
\end{proof}

\begin{proof}[Proof of Theorem~\ref{th:optimal-reference}]
    With $\omega_\rho\circ E_{\sigma_H}=\rho_L\otimes\sigma_H$,
    \begin{equation}
        \cI_{E_{\sigma_H}}(\rho)=D(\rho\|\rho_L\otimes\sigma_H)
        =[S(\rho_L)+S(\rho_H)-S(\rho)]+D(\rho_H\|\sigma_H)
        =I_\rho(L\!:\!H)+D(\rho_H\|\sigma_H),
    \end{equation}
    which is Eq.~\eqref{eq:decomposition}. Nonnegativity of $D(\rho_H\|\sigma_H)$, with equality iff $\sigma_H=\rho_H$, proves the theorem.
\end{proof}

\begin{proof}[Proof of Theorem~\ref{thm:rg-monotone}]
    Let $\rho_2:=\omega_\rho\circ E_2$. The semi-group condition $E_2\circ E_1=E_2$ gives
    \begin{equation}
        \rho_2\circ E_1=\omega_\rho\circ(E_2\circ E_1)=\omega_\rho\circ E_2=\rho_2,
    \end{equation}
    so $\rho_2$ is $E_1$-invariant. The Pythagorean theorem of relative entropy with respect to the conditional expectation $E_1$~\cite{petz2008quantum} then gives
    \begin{equation}
        D_\cA(\rho\|\rho_2)=D_\cA(\rho\|\rho_1)+D_\cA(\rho_1\|\rho_2),\qquad \rho_1=\omega_\rho\circ E_1,
    \end{equation}
    which is Eq.~\eqref{eq:pythagoras-tower}.
\end{proof}

\section*{Derivation of the BKM metric for thermal states}
\label{app:BKM}

We derive Eqs.~\eqref{eq:BKM-expansion} and~\eqref{eq:BKM-Euclidean-correlator} of the main text. Throughout, $\rho(g)=e^{-\beta H(g)}/Z(g)>0$, with $Z(g)=\Tr e^{-\beta H(g)}$, acts on the fixed finite-dimensional reduced space $\cH_<$, at fixed $\beta$, spatial volume, and UV regulator. All traces, logarithms, and inverses below are on $\cH_<$, and $\partial_i:=\partial/\partial g^i$.

\subsection*{1. Relative entropy at second order}

Let $\sigma(t):=\rho(g+t\,dg)$ with $\sigma(0)=\rho(g)$, and set
$D(\sigma(t)\|\rho)=\Tr[\sigma\log\sigma]-\Tr[\sigma\log\rho]$.
By definition, $D(\sigma(0)\|\rho)=0$.
The logarithm of a positive operator $\sigma$ can be written as
\begin{equation}
    \log \sigma(t) = \int_0^\infty \qty( \frac{1}{1+u} - (\sigma(t)+u)^{-1} ) du.
\end{equation}
Because the derivative of the inverse of some invertible matrix $A$ is given by
$\partial_t A^{-1}=-A^{-1} (\partial_t A) A^{-1}$, we have
\begin{equation}
    \partial_t(\sigma(t)+u)^{-1} = - (\sigma(t)+u)^{-1} \,\dot{\sigma}(t) \, (\sigma(t)+u)^{-1},\qquad \dot{\sigma}=\partial_t\,\sigma,
\end{equation}
which yields
\begin{equation}
    \partial_t\log\sigma=\int_0^\infty\! du\,(\sigma+u)^{-1}\dot\sigma\,(\sigma+u)^{-1}.
    \label{eq:log-derivative}
\end{equation}

Because the trace cyclicity implies that
\begin{equation}
    \Tr[\sigma \,\partial_t \log \sigma] = \Tr \dot{\sigma}=\partial_t \Tr\sigma =\partial_t (1)=0,
\end{equation}
hence
\begin{equation}
    \partial_t D(\sigma(t)\|\rho)=\Tr[\dot\sigma\,(\log\sigma-\log\rho)] \quad \therefore \ \partial_t D(\sigma(t)\|\rho)\vert_{t=0}=0.
\end{equation}
Differentiating once more and setting $t=0$, the term proportional to $\ddot\sigma$ drops because $\log\sigma-\log\rho$ vanishes there, leaving
\begin{equation}
    D\big(\rho(g+dg)\big\|\rho(g)\big)=\tfrac12\chi_{ij}\,dg^idg^j+O(dg^3),
    \qquad
    \chi_{ij}=\Tr\!\left[\partial_i\rho\,\partial_j\log\rho\right],
    \label{eq:BKM-definition}
\end{equation}
which is Eq.~\eqref{eq:BKM-expansion}. Substituting Eq.~\eqref{eq:log-derivative}, we obtain the BKM (Kubo--Mori) metric~\cite{petz2008quantum},
\begin{equation}
    \chi_{ij}=\int_0^\infty\! du\,\Tr\!\left[\partial_i\rho\,(\rho+u)^{-1}\partial_j\rho\,(\rho+u)^{-1}\right],
    \label{eq:BKM-resolvent}
\end{equation}
which is manifestly symmetric and positive semidefinite.

\subsection*{2. Formula for \texorpdfstring{$\partial_i\rho$}{partial i rho}}

The derivative of a thermal state $\rho=e^{-\beta H}/Z$ is calculated as follows. It follows that
\begin{equation}
    \frac{\partial_i e^{-\beta H}}{Z}
    =-\frac{1}{Z}\int_0^\beta\! d\tau\;e^{-\tau H}\,(\partial_iH)\,e^{-(\beta-\tau)H}
    =-\frac{\beta}{Z}\int_0^1\! ds\;e^{-\beta sH}\,V_i\,e^{-\beta(1-s)H}
    =-\beta\int_0^1\! ds\;\rho^sV_i\rho^{1-s}=-\beta\,\Omega_\rho(V_i)
\end{equation}
where $V_i:=\partial_iH$ and $\Omega_\rho(V_i)=\int_0^1ds\,\rho^s V_i\rho^{1-s}$.

By taking the trace, we obtain
\begin{equation}
    \frac{\partial_i Z}{Z} = -\beta \Tr \Omega_\rho(V_i) = -\beta \expval{V_i}_\rho \Rightarrow \partial_i Z^{-1}=\frac{\beta \expval{V_i}_\rho}{Z}
    \quad\therefore
    e^{-\beta H} \partial_i Z^{-1} = \beta \expval{V_i}_\rho \rho = \beta \Omega_\rho(\expval{V_i}_\rho)
\end{equation}
where we used $\Omega_\rho(\id)=\rho$ in the last equality.
Therefore
\begin{equation}
    \partial_i\rho=-\beta\,\Omega_\rho(\delta V_i),
    \qquad
    \delta V_i:=V_i-\expval{V_i}_\rho,
    \label{eq:duhamel-drho}
\end{equation}
as quoted in the main text.

\subsection*{3. Calculation of the BKM metric \texorpdfstring{$\chi_{ij}$}{chi ij}}

Because $\log\rho=-\beta H-\log Z$ and $\partial_j\log Z=-\beta\expval{V_j}_\rho$,
\begin{equation}
    \partial_j\log\rho=-\beta V_j+\beta\expval{V_j}_\rho=-\beta\,\delta V_j .
    \label{eq:dlogrho}
\end{equation}
Substituting Eqs.~\eqref{eq:duhamel-drho} and~\eqref{eq:dlogrho} into Eq.~\eqref{eq:BKM-definition},
\begin{equation}
    \chi_{ij}
    =\beta^2\,\Tr\!\left[\Omega_\rho(\delta V_i)\,\delta V_j\right]
    =\beta^2\int_0^1\! ds\;\Tr\!\left[\rho^{s}\,\delta V_i\,\rho^{1-s}\,\delta V_j\right].
    \label{eq:chi-duhamel}
\end{equation}

~

Let $A(\tau):=e^{\tau H}Ae^{-\tau H}$ denote Euclidean Heisenberg evolution. Then
\begin{equation}
    \expval{\delta V_i(\tau)\,\delta V_j(0)}_\beta
    =\frac1Z\Tr\!\left[e^{-\beta H}e^{\tau H}\delta V_ie^{-\tau H}\delta V_j\right]
    =\frac1Z\Tr\!\left[e^{-(\beta-\tau)H}\,\delta V_i\,e^{-\tau H}\,\delta V_j\right],
\end{equation}
With $\tau=\beta(1-s)$, this equals $\Tr[\rho^s\delta V_i\rho^{1-s}\delta V_j]$, the integrand of Eq.~\eqref{eq:chi-duhamel}. 
Hence, Eq.~\eqref{eq:chi-duhamel} becomes
\begin{equation}
    \chi_{ij}=\beta\int_0^\beta\! d\tau\;\expval{\delta V_i(\tau)\,\delta V_j(0)}_\beta .
    \label{eq:chi-euclidean}
\end{equation}
The symmetry $\chi_{ij}=\chi_{ji}$ is realized here by the KMS condition $\expval{\delta V_i(\tau)\delta V_j(0)}_\beta=\expval{\delta V_j(0)\delta V_i(\tau-\beta)}_\beta$.

~

Let us now substitute the explicit coupling dependence. The Hamiltonian is given as $H(g)=H_0+\sum_i g^i\int_\Sigma d^{d-1}\bm x\,\cO_i(0,\bm x)$. Thus,
\begin{equation}
    V_i=\pdv{H}{g^i}=\int_\Sigma\! d^{d-1}\bm x\;\cO_i(0,\bm x),
    \qquad
    \delta V_i(\tau)=\int_\Sigma\! d^{d-1}\bm x\;\left[\cO_i(\tau,\bm x)-\expval{\cO_i(\tau,\bm x)}_\rho\right].
\end{equation}
Substituting this in Eq.~\eqref{eq:chi-euclidean}, we obtain Eq.~\eqref{eq:BKM-Euclidean-correlator},
\begin{equation}
    \chi_{ij}=\beta\int_0^\beta\! d\tau\!\int_\Sigma\! d^{d-1}\bm x\,d^{d-1}\bm y\;\expval{\cO_i(\tau,\bm x)\,\cO_j(0,\bm y)}_{\beta,c},
\end{equation}
where the connected two-point function is defined as
\begin{equation}
    \expval{AB}_{\beta,c}=\expval{AB}_\rho - \expval{A}_\rho\expval{B}_\rho.
\end{equation}

\subsection*{4. Local comparison along a stable RG direction}

Keep the regulated space, $V$, $\widehat\cG$, temperature, and volume fixed. Suppose that the thermal family satisfies $\widehat\cG_\infty(\rho(g(u)))=\rho_*>0$. The embedded states $V\rho(g(u))V^\dagger$ and their Ces\`aro mean $V\rho_*V^\dagger$ are faithful on $P\cH$. Equation~\eqref{eq:rg-supported-resource} gives the finite relative entropy $D(\rho(g(u))\|\rho_*)$; Eq.~\eqref{eq:BKM-definition} then gives Eq.~\eqref{eq:I-Zam}.

Identification with a physical RG trajectory is a separate input, for example $\widehat\cG(\rho(g(s)))=\rho(g(s+\ell))$ for a fixed step $\ell>0$. On the smooth curve $g^i(u)=g_*^i+uv^i+O(u^2)$ with $du/ds=-\omega u+O(u^2)$, Eq.~\eqref{eq:c-function-slope} gives
\begin{equation}
    \frac{dc}{du}
    =-\kappa g^{\rm Zam}_{ij}(g(u))\frac{dg^i}{du}\frac{dg^j}{du}\frac{du}{ds}
    =\kappa\omega g^{\rm Zam}_{vv,*}u+O(u^2).
\end{equation}
Integrating from $0$ to $u$ proves Eq.~\eqref{eq:c-near-IR-direction}; comparison with Eq.~\eqref{eq:I-Zam} proves Eq.~\eqref{eq:IG-c-near-IR}. The cubic remainder requires the stated smoothness. Exactly marginal directions have vanishing beta function and are not covered by $\omega>0$.

We finally note that the BKM metric comes from a connected thermal correlator, whereas the Zamolodchikov metric comes from the vacuum correlator. Thus, their ratio along $v$ need not be universal. Spatial translation invariance supplies a volume factor in the BKM metric and a short-distance correlator proportional to $|x|^{-2\Delta}$ requires a UV regulator when $2\Delta\geq d$. Thus $C_{*,v}$ depends on the direction, temperature, volume, and regularization prescription.

\end{document}